\newtheorem{definition}{Definition}
\newtheorem{proposition}[definition]{Proposition}
\newtheorem{theorem}[definition]{Theorem}
\newtheorem{remark}[definition]{Remark}
\begin{document}
%
\title{Secure Multiplex Coding Over Interference Channel with Confidential Messages}

\author{\IEEEauthorblockN{Xiaolin Li}
\IEEEauthorblockA{Dept.~of Electrical and Computer Engineering\\
Hong Kong University of Science and Technology\\
darwinlxl@ust.hk } \and \IEEEauthorblockN{Ryutaroh Matsumoto}
\IEEEauthorblockA{Dept.~of Communications and Integrated Systems,\\
Tokyo Institute of Technology,  152--8550 Japan.\\
ryutaroh@it.ss.titech.ac.jp }

}

\maketitle

\begin{abstract}
In this paper, inner and outer bounds on the capacity region of
two-user interference channels with two confidential messages have
been proposed. By adding secure multiplex coding to the error
correction method in \cite{Han-Konbayashi} which achieves the best
achievable capacity region for interference channel up to now, we
have shown that the improved secure capacity region compared with
\cite{Interference} now is the whole Han-Kobayashi region. In
addition, this construction not only removes the rate loss incurred
by adding dummy messages to achieve security, but also change the
original weak security condition in \cite{Interference} to strong
security. Then the equivocation rate for a collection of secret
messages has also been evaluated, when the length of the message is
finite or the information rate is high, our result provides a good
approximation for bounding the worst case equivocation rate. Our
results can be readily extended to the Gaussian interference channel
with little efforts.
\end{abstract}

\begin{keywords}
Information theoretic security, capacity region, interference
channel, secure multiplex coding, strong security.
\end{keywords}


\section{Introduction}

Information theoretic security \cite{liang09} attracts a lot of
attention as security is one of the most important issues in
communication, and it guarantees security even when the adversary
has unlimited computing power.

Interference channel \cite{interference_model} has been one of the
most important channel models investigated in information theory as
it captures the main features of the multi-input multi-output
communication system. Signals from different transmitters in this
model exert influence on each other, which also adds the necessity
and difficulty for secure communication.

In the paper \cite{Han-Konbayashi_ori} that proposed the
Han-Kobayashi region which provided the best inner bound known to
now, the information from each transmitter was divided into two
parts, the first part was for only one receiver (we say this part of
the information is sent over the ``private channel''), and the other
part could be decoded by both receivers (we say this part of the
information is sent over the ``common channel''). This naturally
raises one question: what is the secrecy transmission rate if only
confidential messages are sent? In \cite{Interference}, the authors
proposed a scheme which is just a modification of the coding scheme
in \cite{Han-Konbayashi}, but they only sent information on the
``private channel''. This is a natural solution, but in this paper,
we show somewhat surprisingly that even if we transmit over the
``common channel'', confidentiality can also be guaranteed, thus we
propose a larger achievable security rate region.


Also in \cite{Interference}, outer and inner bounds have been
provided, but are under the weak secrecy requirement \cite{Secrecy},
which requires that the mutual information divided by the length of
the codeword goes to zero as the codeword length goes to infinity.
But this requirement is not strong enough for some applications
\cite{barros08} \cite{maurer94}, because even if this rate goes to
zero asymptotically, vital information bits can still be easily
leaked to an illegitimate receiver. Moreover, secrecy is achieved by
adding dummy random bits into the transmitted signal, which
inevitably decreases the information rate.

The authors in \cite{Interference} did not evaluate the equivocation
rate when the information rates of the secret messages are large or
the length of the message is finite. This means that their results
are only valid for the cases where secrecy can be asymptotically
achieved, but if the secrecy requirement is not achieved, they are
not able to evaluate how much information may be leaked out.

In \cite{strong_inter}, the authors calculated the secure degree of
freedom achievable with strong security requirement in interference
channels. But the degree of freedom is only a crude measure for
information transmission speed, and the knowledge on the capacity
region of the interference channel with strong security requirement
remains to be limited.

In \cite{yamamoto05}, the authors proposed the secure multiplex
coding scheme for wiretap channels, the goal of which is to remove
the rate loss incurred by the random dummy message. The main idea is
to transmit $T$ statistically independent secret messages
simultaneously, and for each secret message, other messages serve as
``random bits'', making it ambiguous for eavesdroppers. In
\cite{matsumotohayashi2011eprint} and
\cite{matsumotohayashi2011netcod}, the authors applied the secure
multiplex coding in different scenarios: broadcast channels with a
common message and secure network coding. They showed that secure
multiplex coding can not only remove the information rate loss, but
can also achieve strong security within the capacity region. Despite
all these findings, it is still not clear whether such technique can
also be generalized to other multiuser communication scenarios.

In this paper, the model of interference channel with confidential
messages as in \cite{Interference} is considered, by applying the
technique of secure multiplex coding,  we have proposed inner and
outer bounds on the capacity region within which the strong security
requirement can be achieved. Moreover, we give the dominating term
approximation for a lower bound on the equivocation rate with finite
message length. We also show that all the above results can be
easily carried over to the Gaussian interference channel case.

This paper is organized as the following: in Section II, the system
model and the necessary mathematical tools used shall be introduced.
In Section III, the random coding scheme is presented, based on
which we propose an inner bound on the capacity region of the
interference channel. An outer bound is also proposed. In Section IV
we extend our results to the Gaussian interference channel. We
provide some discussion and comparison of our results with that in
\cite{Interference} in Section V. Section VI concludes the paper.

\section{System Model and Preliminaries}

\subsection{System Model}
We adopt the same channel model as in \cite{Interference}. Consider
a discrete memoryless interference channel with finite input
alphabets $\mathcal {X}_1,\ \mathcal{X}_2$, finite output alphabets
$\mathcal {Y}_1,\ \mathcal{Y}_2$, and the channel transition
probability distribution $P_{Y_1, Y_2| X_1, X_2}$. Two transmitters
wish to send independent, confidential messages to their respective
receivers. The channel model is illustrated in Figure \ref{Interf}.

\begin{figure}[t]
\centering{}\includegraphics[scale=0.16,trim=0 0 0
0,clip=]{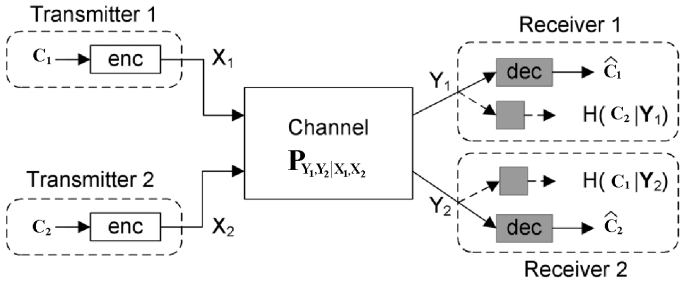}\caption{Interference channel with confidential
messages.} \label{Interf}
\end{figure}

The main goals of communication under this framework are:

\begin{enumerate}
\item To ensure the decoding error probability for each receiver to be
small enough;

\item Secrecy requirement, which means the receiver intending to
receive one message should be kept in ignorance for the other
message.
\end{enumerate}

By secure multiplex coding, we mean that multiple statistically
independent messages are sent over virtually different channels
(actually all these messages are sent simultaneously through the
same physical channel), and any such channel or collection of
channels is required to be secure to the unintended receiver.

To make the above arguments accurate, the definition of the capacity
region for the interference channel with secure multiplex coding is
given as follow:

\begin{definition}
The rate tuple $(R_{1,1},\ \ldots,\ R_{1,T_1},$ $ R_{2,1},\ \ldots,\
R_{2,T_2})$ and the equivocation rate tuple $\{
(R_{1,\mathcal{I}_1,e}, R_{2,\mathcal{I}_2,e}) \mid  \emptyset \neq
\mathcal{I}_1\subseteq \{1, \ldots, T_1\},\  \emptyset
\neq\mathcal{I}_2 \subseteq \{1, \ldots, T_2\}\}$ are said to be
\emph{achievable} for the secure multiplex coding with $T_1$ secret
messages for sender 1 and $T_2$ messages for sender 2, if there
exists a sequence of stochastic encoders for sender 1 denoted as
$\zeta_{1}^{n}$ : $\mathcal{C}_{1,1}^{n} \times \cdots \times
\mathcal{C}_{1,T_1}^{n} \rightarrow \mathcal{X}_{1}^{n}$, and for
sender 2 denoted as $\zeta_{2}^{n}$ : $\mathcal{C}_{2,1}^{n} \times
\cdots \times \mathcal{C}_{2,T_2}^{n} \rightarrow
\mathcal{X}_{2}^{n}$, and for receiver 1 deterministic decoder
$\varphi_{1}^{n}: \mathcal{Y}_{1}^{n} \rightarrow
\mathcal{C}_{1,1}^{n} \times \cdots \times \mathcal{C}_{1,T_1}^{n}
$, and for receiver 2 deterministic decoder $\varphi_{2}^{n}:
\mathcal{Y}_{2}^{n} \rightarrow \mathcal{C}_{2,1}^{n} \times \cdots
\times \mathcal{C}_{2,T_2}^{n} $ such that
\begin{align}
\lim_{n\rightarrow \infty} \mathrm{Pr}[
(C_{1,1}^{n},\ldots, C_{1,T_1}^{n}) \neq \varphi_{1}^{n}(Y_{1}^{n}) &\nonumber\\
\textrm{ or } (C_{2,1}^{n},\ldots, C_{2,T_2}^{n}) \neq
\varphi_{2}^{n}(Y_{2}^{n})]
&= 0,\label{error_prob}\\
\lim_{n\rightarrow\infty} I(C_{1,\mathcal{I}_1}^{n}; Y_{2}^{n}) = 0\
\Bigl(\mbox{if } R_{1,\mathcal{I}_1,e} &= \sum_{i\in\mathcal{I}_1}R_{1,i}\Bigr),\label{strong1}\\
\lim_{n\rightarrow\infty} I(C_{2,\mathcal{I}_2}^{n}; Y_{1}^{n}) = 0\
\Bigl(\mbox{if } R_{2,\mathcal{I}_2,e} &= \sum_{i\in\mathcal{I}_2}R_{2,i}\Bigr),\label{strong2}\\
\liminf_{n\rightarrow\infty} H(C_{1,\mathcal{I}_1}^{n}|  Y_{2}^{n})/n &\geq R_{1,\mathcal{I}_1,e},\label{equi1}\\
\liminf_{n\rightarrow\infty} H(C_{2,\mathcal{I}_2}^{n}|  Y_{1}^{n})/n &\geq R_{2,\mathcal{I}_2,e},\label{equi2}\\
\liminf_{n\rightarrow\infty} \frac{\log| \mathcal{C}_{1,i}^{n}| }{n} &\geq R_{1,i},\label{rage1}\\
\liminf_{n\rightarrow\infty} \frac{\log| \mathcal{C}_{2,j}^{n}| }{n}
&\geq R_{2,j},\label{rate2}
\end{align}
for $i=1$, \ldots, $T_1$ and $j=1$, \ldots, $T_2$ , where $C_{1,
i}^{n}$ and $C_{2, j}^{n}$ represent the $i$-th secret message from
sender 1 and the $j$-th secret message from sender 2 respectively.
All of $C_{1, i}^{n}$ and $C_{2, j}^{n}$ have uniform distribution
on $\mathcal{C}_{1, i}^{n}$ and $\mathcal{C}_{2, j}^{n}$ and are
statistically independent. Both of $C_{1,\mathcal{I}_1,}^{n}$ and
$C_{2,\mathcal{I}_2,}^{n}$ are collections of random variables:
$C_{1,\mathcal{I}_1}^{n} = \{C_{1,i}^{n}\mid  i \in \mathcal{I}_1\}$
and $C_{2,\mathcal{I}_2}^{n} = \{C_{2,j}^{n}\mid  j \in
\mathcal{I}_2\}$. The received signals by the two receivers are
denoted as $Y_{1}^{n}$ and $Y_{2}^{n}$, with the transmitted signals
$\zeta_{1}^{n}(C_{1,1}^{n}$, \ldots, $C_{1,T_1}^{n})$,
$\zeta_{2}^{n}(C_{2,1}^{n}$, \ldots, $C_{2,T_2}^{n})$, and the
channel transition probability $P_{Y_1, Y_2\mid X_1, X_2}$. The
capacity region of the secure multiplex coding is the closure of the
achievable rate tuples.
\end{definition}

\begin{remark}
In the above definition we require the mutual information
$I(C_{1,\mathcal{I}_1}^{n}; Y_{2}^{n})$ and
$I(C_{2,\mathcal{I}_2}^{n}; Y_{1}^{n})$ approaches zero as $n$
approaches infinity when $R_{1,\mathcal{I}_1,e} =
\sum_{i\in\mathcal{I}_1}R_{1,i}$ and $R_{1,\mathcal{I}_1,e} =
\sum_{i\in\mathcal{I}_1}R_{1,i}$, this is the requirement of the
strong secrecy according to \cite{Secrecy}.

The main idea behind the multiplex coding is that more
``constraints'' have been put on the confidential message to remove
the rate loss caused by adding dummy message: instead of sending one
confidential message, multiple independent messages are transmitted,
so instead of making the mutual information between
$(\mathcal{C}_{1,1}^{n}, \cdots , \mathcal{C}_{1,T_1}^{n})$ and
$Y_{2}^{n}$ to be zero, we now only need to ensure
$I(\mathcal{C}_{1,i}^{n}, Y_{2}^{n})$ vanishes, which means each
multiplex channel is secure. For other messages
$(\mathcal{C}_{1,1}^{n}, \cdots, \mathcal{C}_{1,i-1}^{n},
\mathcal{C}_{1,i+1}^{n}, \cdots, \mathcal{C}_{1,T_1}^{n})$, since
they are independent with $\mathcal{C}_{1,i}^{n}$, so they acted as
noise and provide protection for $\mathcal{C}_{1,i}^{n}$.
\end{remark}


\subsection{Preliminaries}

In this paper the main tools we are going to use is the strengthened
privacy amplification theorem, which will be sensitive to the change
of bases. So throughout the whole paper we just use natural log.

\begin{definition}
\cite{carter79} Let $\mathcal{F}$ be a set of functions from
$\mathcal{S}_1$ to $\mathcal{S}_2$, and $F$ the not necessarily
uniform random variable on $\mathcal{F}$. If for any $x_1 \neq x_2
\in \mathcal{S}_1$ we have
\[
\mathrm{Pr}[F(x_1)=F(x_2)] \leq \frac{1}{|\mathcal{S}_2|},
\]
then $\mathcal{F}$ is said to be a \emph{family of two-universal
hash functions}.
\end{definition}

\begin{theorem}\label{thm:pa}\cite{multiplex}\cite{matsumotohayashi2011netcod}
Let $L$ be a random variable with uniform distribution over a finite
alphabet $\mathcal{L}$ and $Z$ be any discrete random variable.
Let $\mathcal{F}$ be a family of two-universal hash functions from
$\mathcal{L}$ to $\mathcal{M}$, and $F$ be a random variable on
$\mathcal{F}$ statistically independent of $L$. Then
\begin{align}
&\mathbf{E}_f \exp(\rho I(F(L);Z|F=f))  \nonumber\\ \leq 1+
&\frac{|\mathcal{M}|^\rho}{|\mathcal{L}|^\rho} \sum_{z,\ell}
P_L(\ell) P_{Z|L}(z|\ell)^{1+\rho} P_Z(z)^{-\rho}.
\label{hpa1discrete}
\end{align}
for $0<\rho\leq 1$. 

\end{theorem}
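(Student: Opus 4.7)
The plan is to chain two Jensen-type inequalities with the two-universal hash property. First, for each fixed $f$ I would bound $\exp(\rho I(F(L); Z | F=f))$ by relaxing the mutual information via $I(A;B) \leq D(P_{AB} \| U_\mathcal{M} P_Z)$ (which holds because $D(P_{AB} \| Q_A Q_B) \geq I(A;B)$ for any product reference $Q_A Q_B$) and then applying the standard estimate $\exp(\rho D(P\|Q)) \leq \sum_x P(x)^{1+\rho} Q(x)^{-\rho}$ (itself a one-line consequence of Jensen for the convex exponential). This yields
\begin{equation*}
\exp(\rho I(F(L);Z|F=f)) \leq |\mathcal{M}|^\rho \sum_{m,z} P_Z(z)^{-\rho} P_{F(L),Z|F=f}(m,z)^{1+\rho}.
\end{equation*}

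The heart of the argument is then to bound the $F$-expectation of the inner sum. Fix $z$, set $c_\ell := P_L(\ell) P_{Z|L}(z|\ell)$ and $X_m := \sum_\ell c_\ell \mathbf{1}_{f(\ell)=m}$, so that $\sum_\ell c_\ell = P_Z(z)$. I would use the identity $\sum_m X_m^{1+\rho} = \sum_\ell c_\ell X_{f(\ell)}^\rho$, obtained by factoring $X_m^{1+\rho} = X_m \cdot X_m^\rho$ and collecting the indicator $\mathbf{1}_{f(\ell)=m}$ that sits inside $X_m$. This rewriting is crucial: it trades the convex $(1+\rho)$-power for a \emph{concave} $\rho$-power, which is what opens the door to Jensen.

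Taking expectation over $F$, Jensen's inequality on $x\mapsto x^\rho$ (concave since $0<\rho\leq 1$) gives $\mathbf{E}_F [X_{F(\ell)}^\rho] \leq (\mathbf{E}_F X_{F(\ell)})^\rho$. The expectation inside expands to $c_\ell + \sum_{\ell'\neq\ell} c_{\ell'} \Pr[F(\ell')=F(\ell)]$, which the two-universal property bounds by $c_\ell + P_Z(z)/|\mathcal{M}|$. Subadditivity of $x^\rho$ on $\mathbb{R}_{\geq 0}$ splits the $\rho$-power into $c_\ell^\rho + (P_Z(z)/|\mathcal{M}|)^\rho$, and multiplying by $c_\ell$ and summing over $\ell$ yields
\begin{equation*}
\mathbf{E}_F \sum_m X_m^{1+\rho} \leq \sum_\ell c_\ell^{1+\rho} + \frac{P_Z(z)^{1+\rho}}{|\mathcal{M}|^\rho}.
\end{equation*}

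Finally, reassembling across $z$, the second piece above multiplied by $|\mathcal{M}|^\rho P_Z(z)^{-\rho}$ and summed over $z$ collapses to the constant $1$, while the first piece --- after inserting $P_L(\ell) = 1/|\mathcal{L}|$ --- produces exactly $\frac{|\mathcal{M}|^\rho}{|\mathcal{L}|^\rho} \sum_{\ell,z} P_L(\ell) P_{Z|L}(z|\ell)^{1+\rho} P_Z(z)^{-\rho}$, giving the claimed bound. The step I expect to be the subtlest is spotting the identity $\sum_m X_m^{1+\rho} = \sum_\ell c_\ell X_{f(\ell)}^\rho$: direct attacks on $(\sum_\ell c_\ell \mathbf{1}_{f(\ell)=m})^{1+\rho}$ via subadditivity or H\"older leave behind a residual cross-term of order $(|\mathcal{L}|/|\mathcal{M}|)^{1-\rho}$ in place of the clean constant $1$, so reducing to the concave-power problem is what makes the normalizing factor $|\mathcal{M}|^\rho/|\mathcal{L}|^\rho$ come out exactly.
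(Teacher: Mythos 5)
Your proof is correct: the chain $I\le D(P_{AB}\Vert U_{\mathcal M}P_Z)$, Jensen for $\exp(\rho D)\le\sum P^{1+\rho}Q^{-\rho}$, the rewriting $\sum_m X_m^{1+\rho}=\sum_\ell c_\ell X_{f(\ell)}^\rho$, concavity of $x^\rho$, the two-universal bound, and subadditivity all check out, and the uniformity of $L$ is used exactly where needed to produce the factor $|\mathcal L|^{-\rho}$. The paper itself gives no proof of this theorem (it is quoted from \cite{multiplex} and \cite{matsumotohayashi2011netcod}), and your argument is essentially the standard Hayashi--Matsumoto derivation used there.
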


\begin{remark}
It was assumed that $Z$ was discrete in
\cite{matsumotohayashi2011netcod}. However, when the alphabet of $L$
is finite, there is no difficulty to extend the original result.
\end{remark}

\begin{definition}
\begin{align}
\psi(\rho, P_{Z|L}, P_L) &= \log \sum_z \sum_\ell P_L(\ell)
P_{Z|L}(z|\ell)^{1+\rho} P_Z(z)^{-\rho},
\label{eq:psid}\\
\phi(\rho,P_{Z|L},P_L) &= \log \sum_z\left( \sum_{\ell} P_{L}(\ell)
(P_{Z|L}(z|\ell)^{1/(1-\rho)})\right)^{1-\rho}. \label{phid}
\end{align}
\end{definition}
Observe that $\phi$ is essentially Gallager's function $E_0$
\cite{gallager68}. The main reason we introduce this function is
that its concavity greatly facilitates the process of derivation.

\begin{proposition}\cite{gallager68,hayashi11}
$\exp(\phi(\rho, P_{Z|L}, P_L))$ is concave with respect to $P_L$
with fixed $0<\rho< 1$ and $P_{Z|L}$. For fixed $0<\rho< 1$, $P_L$
and $P_{Z|L}$ we have

\begin{equation}\label{psileqphi}
\exp(\psi(\rho, P_{Z|L}, P_L))\leq \exp(\phi(\rho, P_{Z|L}, P_L)).
\end{equation}

\end{proposition}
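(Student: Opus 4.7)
The plan is to handle the two assertions separately, both by elementary convex-analytic moves applied under the sum over $z$.

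For the concavity statement, I would rewrite
\[
\exp(\phi(\rho,P_{Z|L},P_L)) = \sum_{z} g_z(P_L)^{1-\rho},\qquad g_z(P_L) := \sum_{\ell} P_L(\ell)\, P_{Z|L}(z|\ell)^{1/(1-\rho)},
\]
and observe that for each fixed $z$ the inner quantity $g_z(P_L)$ is affine in $P_L$. Since $0<\rho<1$, the scalar function $t\mapsto t^{1-\rho}$ is concave on $[0,\infty)$, so each summand $g_z(\cdot)^{1-\rho}$ is the composition of a concave function with an affine map and hence concave in $P_L$. Summing concave functions over $z$ preserves concavity, which yields the first assertion.

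For the inequality $\exp(\psi)\le\exp(\phi)$, the key is to establish the inequality pointwise in $z$ and then sum. For a fixed $z$, I would apply H\"{o}lder's inequality with the conjugate exponents $1/\rho$ and $1/(1-\rho)$ to the splitting
\[
P_L(\ell)\, P_{Z|L}(z|\ell)^{1+\rho}
= \bigl[P_L(\ell)\, P_{Z|L}(z|\ell)\bigr]^{\rho}\cdot \bigl[P_L(\ell)\, P_{Z|L}(z|\ell)^{1/(1-\rho)}\bigr]^{1-\rho}.
\]
Summing over $\ell$ and using $\sum_{\ell} P_L(\ell)P_{Z|L}(z|\ell)=P_Z(z)$ produces
\[
\sum_{\ell} P_L(\ell)\, P_{Z|L}(z|\ell)^{1+\rho} \le P_Z(z)^{\rho}\, g_z(P_L)^{1-\rho},
\]
which is exactly $P_Z(z)^{-\rho}\sum_{\ell} P_L(\ell)\,P_{Z|L}(z|\ell)^{1+\rho}\le g_z(P_L)^{1-\rho}$. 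Summing this pointwise bound over $z$ gives $\exp(\psi)\le \exp(\phi)$, and since $\exp$ is monotone the stated inequality on $\psi$ and $\phi$ follows after taking logs.

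Neither step is a genuine obstacle: the first is a textbook concavity-preservation argument, and the second is a one-line H\"{o}lder computation once the exponents $1/\rho$ and $1/(1-\rho)$ are chosen to match the exponents appearing inside $\psi$ and $\phi$. The only point that requires some care is checking that $0<\rho<1$ is used in both places (to get concavity of $t^{1-\rho}$ and to have valid H\"{o}lder exponents); this explains why the proposition excludes the endpoints of the interval.
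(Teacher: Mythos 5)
Your proof is correct: the paper itself states this proposition without proof (citing Gallager and Hayashi), and your two steps --- concavity of $t\mapsto t^{1-\rho}$ composed with the affine map $P_L\mapsto\sum_\ell P_L(\ell)P_{Z|L}(z|\ell)^{1/(1-\rho)}$, and H\"{o}lder's inequality with exponents $1/\rho$ and $1/(1-\rho)$ applied termwise in $z$ --- are exactly the standard arguments used in those references. The exponent bookkeeping in your splitting checks out, so nothing further is needed.
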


It can be found in \cite{gallager68} that the derivative of the
Gallager's function has a simple expression when $\rho = 0$:

\begin{equation}\label{derivative}
\lim_{\rho \rightarrow 0}\frac{d\phi(\rho, P_{Z|L}, P_L)}{d\rho} =
\sum_{l, z}P_{L, Z}(l, z)\log\frac{P_{Z| L}}{P_{Z}} = I(Z, L).
\end{equation}

Introduction to the Han-Kobayashi region can be found in Lemma 4 in
\cite{Han-Konbayashi}, and is presented as below:

\begin{theorem}

Let $\mathcal{P}^{*}_{1}$ be the set of probability distribution
$P^{*}_{1}(\cdot)$ that factor as

\begin{equation} \label{HKP}
\begin{split}
&P^{*}(u, w_1, w_2, v_1, v_2)\\
&= p(u)p(v_1,w_1|u)p(v_2,w_2|u)
 \end{split}
 \end{equation}

Let $R_{HK}(P^{*}_{1})$ be the set of nonnegative rate-tuples $(R_1,
R_2)$ that satisfy

\addtocounter{equation}{1}
\begin{align}
R_1&\leq I(V_1;Y_1|W_2U) \label{Han-Kobayashi_1}\\
R_1&\leq I(V_1;Y_1|W_1W_2U)+I(V_2W_1;Y_2|W_2U)\\
R_2&\leq I(V_2;Y_2|W_1U)\\
R_2&\leq I(V_2;Y_2|W_2W_1U)+I(V_1W_2;Y_1|W_1U)\\
R_1+R_2&\leq I(V_1W_2;Y_1|U)+ I(V_2;Y_2|W_1W_2U)\\
R_1+R_2&\leq I(V_1;Y_1|W_1W_2U)+ I(V_2W_1;Y_2|U)\\
R_1+R_2&\leq I(V_1W_2;Y_1|W_1U)+ I(V_2W_1;Y_2|W_2U)\\
2R_1+R_2&\leq I(V_1W_2;Y_1|U)+I(V_1;Y_1|W_1W_2U) +\nonumber\\
&\qquad I(V_2W_1;Y_2|W_2U)\\
R_1+2R_2&\leq I(V_2W_1;Y_2|U)+I(V_2;Y_2|W_1W_2U) + \nonumber\\
&\qquad I(V_1W_2;Y_1|W_1U) \label{Han-Kobayashi_final}\\
R_1,\ R_2 &\geq 0
\end{align}

\noindent Then we have

$$R_{HK} = \cup_{P_{1}^{*}\in \mathcal{P}_{1}^{*}}R_{HK}(P_{1}^{*})$$

\noindent is an achievable rate region for the discrete memoryless
IC.

\end{theorem}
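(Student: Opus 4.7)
The plan is to recover the Han-Kobayashi region by the classical rate-splitting/superposition coding argument with joint typicality decoding, followed by Fourier-Motzkin elimination. First, I would split the two private rates as $R_1=R_{10}+R_{11}$ and $R_2=R_{20}+R_{22}$, where the subscripts $i0$ label the ``common'' sub-message intended to be decodable at both receivers, and $ii$ labels the strictly ``private'' sub-message intended only for receiver $i$. The auxiliary random variable $U$ will play the role of a time-sharing sequence shared by both encoders.

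For the codebook, I would generate one $u^{n}$ i.i.d.\ from $p(u)$, then, superimposed on $u^{n}$, draw $2^{nR_{i0}}$ common codewords $w_i^n(m_{i0})$ i.i.d.\ from $\prod p(w_i|u)$ and, superimposed on each $w_i^n$, draw $2^{nR_{ii}}$ satellite codewords $v_i^n(m_{i0},m_{ii})$ i.i.d.\ from $\prod p(v_i|w_i,u)$. The channel input $x_i^n$ is produced from $v_i^n$ via $\prod p(x_i|v_i,w_i,u)$. Each receiver $k$ decodes the triple $(\hat m_{k k},\hat m_{k 0},\hat m_{\bar k 0})$ by searching for the unique jointly typical tuple $(u^n,w_1^n,w_2^n,v_k^n,y_k^n)$ with respect to the joint distribution induced by $P^{*}_{1}$.

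The error analysis partitions the event ``wrong message at receiver~1'' according to which of $(v_1^n,w_1^n,w_2^n)$ disagree with the transmitted ones while the remaining components are held fixed, and similarly for receiver~2. A routine application of the joint typicality / packing lemma then yields a bound of the form (sum of mis-decoded rates) $\le$ (conditional mutual information of the mis-decoded variables given the correctly decoded ones and $U$), up to arbitrarily small $\epsilon$. Concretely, this produces constraints such as $R_{11}\le I(V_1;Y_1|W_1W_2U)$, $R_{11}+R_{10}\le I(V_1;Y_1|W_2U)$, $R_{11}+R_{20}\le I(V_1W_2;Y_1|W_1U)$, $R_{11}+R_{10}+R_{20}\le I(V_1W_2;Y_1|U)$, together with the symmetric set at receiver~2, all for $R_{10},R_{11},R_{20},R_{22}\ge 0$.

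The last and most delicate step is Fourier-Motzkin elimination of the four split rates $R_{10},R_{11},R_{20},R_{22}$ from this eight-plus-four-constraint system, subject to $R_1=R_{10}+R_{11}$ and $R_2=R_{20}+R_{22}$, to obtain the ten inequalities \eqref{Han-Kobayashi_1}--\eqref{Han-Kobayashi_final}. I expect this elimination to be the main obstacle: the projection generates many candidate inequalities, most of which turn out to be redundant or dominated by others, and pruning them down to the clean Han-Kobayashi form requires care. Once this is done, taking the union over all admissible $P^{*}_{1}\in\mathcal{P}^{*}_{1}$ yields $R_{HK}$ as claimed.
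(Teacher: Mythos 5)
Your proposal is correct and coincides with the proof of the source the paper relies on: the paper itself offers no proof of this theorem, simply quoting Lemma 4 of \cite{Han-Konbayashi}, whose derivation is exactly the rate-splitting / superposition-coding / joint-typicality / Fourier--Motzkin argument you describe. The one point to keep straight in the error analysis is that the event in which only the other user's common index $m_{\bar{k}0}$ is wrong must \emph{not} be counted as an error at receiver $k$ (your constraint list already reflects this, since it contains no bound on $R_{\bar{k}0}$ alone); insisting on unique decoding of that index would introduce an extra packing constraint and yield a strictly smaller region after elimination.
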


\section{Capacity Region of the Secure Multiplex Coding with Strong Secrecy Requirement}


\subsection{Inner Bound}

Denote the total rate of the sender $t$ by $0\leq R_t =
\sum_{i=1}^{T_t+1} R_{t,i} \leq I(V_t;Y_t| U)$ (here $t = 1$ or 2,
and we adopt this notation throughout the paper). An inner bound is
proposed as the following:

\begin{theorem}\label{inner}

Let $\mathcal{P}^{*}_{2}$ be the set of probability distribution
$P^{*}_{2}(\cdot)$ that factor as

\begin{equation} \label{PIB}
\begin{split}
&P(u, w_1, w_2, v_1, v_2, x_1, x_2, y_1, y_2)\\
&= P(u)P(w_1, w_2, v_1, v_2, x_1, x_2|u)P(y_1,y_2|x_1, x_2)\\
&= P(u)P(w_1, v_1|u)P(x_1|v_1)P(w_2,
v_2|u)P(x_2|v_2)\\
&\qquad P(y_1,y_2|x_1, x_2)
 \end{split}
 \end{equation}

\noindent Here $x_1,\ x_2$ and $y_1, \ y_2$ are inputs and outputs
for the interference channel respectively.

And $R_{in}(P_{2}^{*})$ be the set of nonnegative rate-tuples and
$(R_1, R_2, R_{1,\mathcal{I}_1, e}, R_{2,\mathcal{I}_2, e})$ satisfy

\begin{align}
R_{1,\mathcal{I}_1, e}^{'} &= R_{1,\mathcal{I}_1, e} + I(V_1;Y_2|U,
V_2)\label{def1}\\
R_{2,\mathcal{I}_2, e}^{'} &= R_{2,\mathcal{I}_2, e} + I(V_2;Y_1|U,
V_1)\label{def2}\\
(R_1, R_2) &\in R_{HK}(P_{2}^{*}) \label{ratepair}\\
(R_{1,\mathcal{I}_1, e}^{'},
R_{2,\mathcal{I}_2, e}^{'}) &\in R_{HK}(P_{2}^{*})\label{eqvopair}\\
0 \leq R_{1,\mathcal{I}_1, e}&\leq \sum_{i\in\mathcal{I}_1}R_{1,i}\label{collrate1}\\
0 \leq R_{2,\mathcal{I}_2, e}&\leq \sum_{j\in\mathcal{I}_2}R_{2,j}\label{collrate2}\\
\end{align}

\noindent Note in the above we abuse the notation a little by
writing $R_{HK}(P_{2}^{*})$, we can write this because if $P(u, w_1,
w_2, v_1, v_2, x_1, x_2, y_1, y_2)\in P_{2}^{*}$, then the marginal
distribution $P(u, w_1, w_2, v_1, v_2)\in P_{1}^{*}$.

An inner bound for the interference channels with secure multiplex
coding is

$$R_{in} = \cup_{P^{*}_{2}\in \mathcal{P}^{*}_{2}}R_{in}(P_{2}^{*})$$

\end{theorem}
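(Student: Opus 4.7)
The plan is to combine the Han-Kobayashi random coding construction with the strengthened privacy amplification in Theorem \ref{thm:pa}, applied to a secure multiplex bin assignment in the spirit of \cite{matsumotohayashi2011netcod}.

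First I would generate the codebook in the layered HK style: draw $U^n$ i.i.d.\ from $P_U$; conditionally on $U^n$, generate $e^{nR_1}$ codewords $(W_1^n,V_1^n)$ and $e^{nR_2}$ codewords $(W_2^n,V_2^n)$ i.i.d.\ from the respective conditional marginals of (\ref{PIB}); and index the $V_t^n$ codewords by message tuples $(c_{t,1},\dots,c_{t,T_t}) \in \mathcal{C}_{t,1}^n\times\cdots\times\mathcal{C}_{t,T_t}^n$ through a uniform random assignment. No dummy randomization is injected; the role of the dummy bits is instead played by the mutually independent messages themselves, which is the essence of multiplex coding. Channel inputs $X_t^n$ are drawn from $\prod P_{X_t|V_t}$ per (\ref{PIB}). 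Because the total rates satisfy (\ref{ratepair}), the usual HK joint-typicality decoder makes the decoding error in (\ref{error_prob}) vanish at both receivers.

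Strong secrecy (\ref{strong1})--(\ref{strong2}) is the core step. For a fixed $\mathcal{I}_1$, I would apply Theorem \ref{thm:pa} with $L$ equal to the $V_1^n$ codeword index (uniform on an alphabet of size $e^{nR_1}$), $Z = (Y_2^n,U^n,V_2^n)$ absorbing all the side information available at receiver 2, and $F$ the induced random surjection from $L$ onto $C_{1,\mathcal{I}_1}$ with image of size $e^{n R_{1,\mathcal{I}_1,e}}$. A two-universality check for the random codebook, together with (\ref{psileqphi}) and the memoryless single-letterization that turns $\psi$ into $n\psi$, converts the right-hand side of (\ref{hpa1discrete}) into $1 + \exp\!\bigl(n[\phi(\rho,P_{Y_2|V_1,U,V_2},P_{V_1|U,V_2}) - \rho(R_1 - R_{1,\mathcal{I}_1,e})]\bigr)$. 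Since $\phi(0)=0$ and the derivative identity (\ref{derivative}) yields $\phi'(0) = I(V_1;Y_2|U,V_2)$, this exponent is strictly negative for a suitable small $\rho$ whenever $R_1 - R_{1,\mathcal{I}_1,e} > I(V_1;Y_2|U,V_2)$, that is, whenever $R_1 > R'_{1,\mathcal{I}_1,e}$ in the sense of (\ref{def1}). A symmetric argument on the sender-2 side gives $R_2 > R'_{2,\mathcal{I}_2,e}$, and condition (\ref{eqvopair}) guarantees that these strengthened rates still lie inside $R_{HK}(P_2^*)$, so reliability and secrecy can be met simultaneously. The equivocation conditions (\ref{equi1})--(\ref{equi2}) then follow from uniformity of the messages together with a standard Fano-type estimate.

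The main obstacle is the cross-coupling between the two senders in the secrecy analysis: the conditioning variable $V_2^n$ that appears when bounding the leakage from sender 1 is itself shaped by sender 2's confidential messages. Two facts must be checked carefully: (a) enlarging $Z$ from $Y_2^n$ to $(Y_2^n,U^n,V_2^n)$ only strengthens the secrecy statement, via $I(C_{1,\mathcal{I}_1}^n;Y_2^n) \leq I(C_{1,\mathcal{I}_1}^n;Y_2^n,U^n,V_2^n)$; and (b) conditionally on $V_2^n$ the sender-1 codebook still acts as a two-universal family, because it is generated independently of everything on the sender-2 side. A secondary point is the need to cover all $(2^{T_1}-1)(2^{T_2}-1)$ nonempty subset pairs at once via a union bound, which is harmless because the leakage decays exponentially in $n$. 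Once these technicalities are in place, the full cascade (\ref{def1})--(\ref{collrate2}) is achieved, establishing $R_{in}$ as an inner bound.
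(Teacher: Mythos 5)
Your construction and analysis track the paper's own proof closely: the same layered Han--Kobayashi codebook, a two-universal hash obtained by composing a uniformly random bijective assignment of messages with coordinate projections (the paper uses random \emph{linear} bijections, but either family is two-universal), an application of Theorem~\ref{thm:pa} with the eavesdropped observation augmented by $U^n$ and the other sender's codeword, the passage from $\psi$ to $\phi$, the evaluation of the exponent at $\rho\to 0$ via the derivative identity (\ref{derivative}), and a union bound over all nonempty subsets to extract one good deterministic tuple $(f_1^n,f_2^n,\lambda,u)$. Your checks (a) and (b) are exactly the ones the paper makes.

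There is, however, one step in your central computation that fails as written: ``the memoryless single-letterization that turns $\psi$ into $n\psi$.'' After applying Theorem~\ref{thm:pa} for a \emph{fixed} codebook $\Lambda=\lambda$, the input distribution appearing in $\psi$ is $P_{V_1^n\mid V_2^n=v_2,\Lambda=\lambda,U^n=u}$, i.e.\ the uniform distribution over the $e^{nR_1}$ codewords that were actually drawn --- not a product distribution --- so neither $\psi$ nor $\phi$ factors into $n$ single-letter terms at that stage, and memorylessness of the channel alone does not rescue it. This is precisely why the paper introduces $\phi$ and the bound (\ref{psileqphi}): $\exp(\phi(\rho,P_{Z|L},P_L))$ is concave in $P_L$, so the expectation over the random codebook $\Lambda$ can be pulled \emph{inside} $\exp(\phi)$, which replaces the codebook-conditional law of $V_1^n$ by the i.i.d.\ law $\prod_i P_{V_1\mid V_2,U}$; only after this averaging does the expression single-letterize (the chain of inequalities leading to (\ref{belog}) and (\ref{RHS})). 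Your proposal never averages over the codebook inside the secrecy bound and never invokes this concavity, so the claimed factorization is unjustified. Once that averaging/concavity step is inserted, the rest of your argument --- the exponent $\phi(\rho,\cdot,\cdot)-\rho(R_1-R_{1,\mathcal{I}_1,e})$, its negativity when $R_1>R'_{1,\mathcal{I}_1,e}$, and the derivation of (\ref{equi1})--(\ref{equi2}) from the residual mutual-information bound --- goes through as in the paper.
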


\begin{remark}
The inner bound of secret capacity over interference channel given
above shows that the whole Han-Kobayashi region can be achieved,
which means that in our proposed coding method, the channel capacity
of the interference channel has been fully utilized, and is
guaranteed to be secure.

From \eqref{eq:gotozero} to \eqref{eq:equi2}, we can see that when
\eqref{def1} -- \eqref{collrate2} are satisfied, then the strong
security can be achieved. Note that \eqref{eq:gotozero} --
\eqref{eq1000} also provides an upper bound for the leaked
information, which is not analyzed in \cite{Interference}.

\end{remark}

\begin{proof}
To prove that the above region is an inner bound on the capacity
region, we need to explicitly show that there does exist certain
scheme that can achieve the bound. In part 1 of the proof, we
present the scheme, and in part 2 we evaluate the equivocation rate.

\textbf{Part 1}: \emph{Random Coding Scheme}

Before we present the random coding scheme, some notations are
introduced here: let $(c_{t,1}^{n}, \cdots, c_{t,T_t}^{n})\in
(\mathcal{C}_{t,1}^{n}, \cdots, \mathcal{C}_{t,T_t}^{n})$ be the
secret messages for transmitter $t$, and denote
$\mathcal{C}_{t}^{n}=\prod_{i=1}^{T_t+1}\mathcal{C}_{t,i}^{n}$,
where $\mathcal{C}_{t,T_t+1}^{n}$ is the alphabet of randomness used
by the stochastic encoder, and $n$ denotes the code length. In here
and all the following expressions $t = 1$ or $2$. Let
$\mathcal{F}_{t}^{n}$ be the set of all linear bijective maps from
$\mathcal{C}_{t}^{n}$ to itself.

We modify the random coding scheme proposed in
\cite{Han-Konbayashi}, and apply the secure multiplex coding
techniques. The new scheme is described in detail as the follows:

Fix the distribution of $P(u)$, $P(w_t, v_t|u)$  and $P(x_t|v_t)$,
also since the channel distribution $P(y_1, y_2| x_1, x_2)$ is
given, all the distributions in \eqref{PIB} are now fixed.

\emph{1. Codebook Generation}: Sender $t$ and receiver  $t$ fix and
agree on the choice of a bijective function $f_{t}^{n}\in
\mathcal{F}_{t}^{n}$. Given $T_t$ secret messages $(c_{t,1}^{n},
\cdots, c_{t,T_t}^{n})\in (\mathcal{C}_{t,1}^{n}, \cdots,
\mathcal{C}_{t,T_t}^{n})$, uniformly choose $c_{t,T_t+1}^{n}\in
\mathcal{C}_{t,T_t+1}^{n}$, let $c_{t} =
(f_{t}^{n})^{-1}(c_{t,1}^{n}, \cdots, c_{t,T_t+1}^{n})$. Here the
message $c_{t,T_t+1}^{n}$ is used by the stochastic encoder to
increase the randomness in the secret message.

In order to use multiplex coding, we write:

\begin{align}\label{C}
C_{t}^{n} = (E_{t}^{n}, B_{t}^{n}) = (f_{t}^{n})^{-1}(C_{t,1}^{n},
\cdots, C_{t, T_t+1}^{n})
\end{align}

In \eqref{C}, $f_{t}^{n}$ belongs to the family of linear bijective
maps $F_{t}^{n}$, and this is achieved by matrix multiplication.
Applying $(f_{t}^{n})^{-1}$ on the secret messages can be achieved
in the following way:

$$C_{t}^{n} = (L_t)^{-1}* [C_{t,1}^{n}, \cdots, C_{t, T_t+1}^{n}]^T$$

Note that if the length of $C_{t,i}^{n}$ is $k_{t,i}$ bits, then
$L_t$ is a nonsingular matrix of size $l_t\times l_t$ with
$l_t=\sum_{1\leq i\leq T_t+1} k_{t,i}$. Since $C_{t}^{n}$ has $l_t$
bits, we just need to take some part of the bits for $B_{t}^{n}$ and
part for $E_{t}^{n}$, then the condition of independence will be
satisfied. This is guaranteed by the uniformness of $L_t$ and
$C_{t}^{n}$.

Equation \eqref{C} actually means that we do not distinguish which
part was to be sent over the ``private channel'' and which was to be
sent over the ``common channel'', after the random bijective
mapping, we just divide the message into two parts and sent them.
But we need to require that $E_{t}^{n}$ and $B_{t}^{n}$ are mutually
independent, this can be achieved because all the messages
$c_{t,i}^{n}$ have uniform distribution over its alphabet and are
all independent.

Then in the following, we will encode $E_{t}^{n}$ and $B_{t}^{n}$ in
two different ways.

Randomly generate a sequence $\textbf{u}$ with probability
$P(\textbf{u}) = \prod_{i=1}^{n}P(u_i)$, and assume that both
transmitters and receivers know the time-sharing sequence
$\textbf{u}$.

For transmitter $t$, generate $2^{nS_t}$ independent sequences
($S_t$ is the information rate over the ``common channel'')
$\textbf{w}_t$ each with probability $P(\textbf{w}_t| \textbf{u}) =
\prod_{i=1}^{n}P(w_{t,i}| u_i)$. Then generate $2^{nR_t}$ ($R_t =
S_t + T_t$, and $T_t$ is the rate of information over ``private
channel'') independent sequences $\textbf{v}_t$ each with
probability $P(\textbf{v}_t|\textbf{w}_t \textbf{u}) =
\prod_{i=1}^{n}P(v_{t,i}|w_{t,i}, u_i)$.
%

\emph{2. Encoding}: Encode $c_t$ with encoder $t1$ and $t2$ (here $t
= 1$ or 2, refer to Fig. \ref{fig2}), according to the codebook
generated in the previous step and obtain the codeword $v_{t}^{n}$.
Then the transmitters generate the channel input sequences based on
respective mappings $P_{X_1| V_1}$ and $P_{X_2| V_2}$. Actually this
step is to apply artificial noise to $v_{t}^{n}$ according to the
conditional probability distribution and get the transmitted signal.
This step is to make the channel of the other receiver more noisy,
and the intended receiver $t$ is supposed to know this $P_{X_t|
V_t}$.

The encoder structure is illustrated by the following figure:

\begin{figure}[H]
\centering{}\includegraphics[scale=0.18,trim=0 0 0
0,clip=]{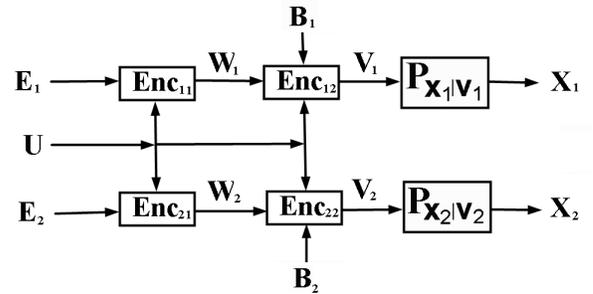}\caption{Code construction for the interference
channel with confidential messages.} \label{fig2}
\end{figure}

In the above figure, $U$ is the time-sharing sequence known by all
the transmitters and receivers. The codebook for encoder $ij$ is
denoted as $\lambda_{ij}\in \Lambda_{ij}$ ($i, j = 1,2$), for the
simplicity of notations, let $\lambda =
(\lambda_{11},\lambda_{12},\lambda_{21},\lambda_{22})$, and $\Lambda
= (\Lambda_{11},\Lambda_{12},\Lambda_{21},\Lambda_{22})$.

According to the structure of the encoder, we have the following
relationship among the variables:

$$W_{t}^{n} = \Lambda(E_{t}^{n}, U^n),\ V_{t}^{n} =
\Lambda(B_{t}^{n}, W_{t}^{n}, U^{n})$$

\emph{3. Decoding}: Without loss of generality, we consider for
receiver 1. Let $A_{\epsilon}^{(n)}(W_1, V_1, W_2, Y_1, U)$ denote
the set of jointly typical sequences defined in \cite{T_Cover_info}
page 521. Since receiver 1 is supposed to know the time sharing
sequence $U^n$, after receiving $Y_{1}^{n}$, the receiver will try
to find $\hat{W_{1}^{n}}, \hat{V_{1}^{n}}, \hat{W_{2}^{n}}$ such
that $(\hat{W_{1}^{n}}, \hat{V_{1}^{n}}, \hat{W_{2}^{n}}, Y_{1}^{n},
U^n)\in A_{\epsilon}^{(n)}(W_1, V_1, W_2, Y_1, U)$. When such choice
exists and is unique then the decoding is successful, then message
$E_{1}^{n}$ and $B_{1}^{n}$ can be restored, and the confidential
message can be readily obtained by $f_{1}^{n}(E_{1}^{n},
B_{1}^{n})$; otherwise declare error.

\textbf{Part 2}:\emph{ Evaluation of Equivocation Rate}

If we compare our modified scheme with the original one in
\cite{Han-Konbayashi}, it can be easily observed that the decoding
error probability in the new scheme is at least as good as the
original one, which saves us the efforts to analyze the probability
of decoding error. Hence we only need to show the existence of the
tuple $(u,\lambda, f_{1}^{n}, f_{2}^{n} )$ such that the strong
secrecy requirement \eqref{strong1}, \eqref{strong2}, \eqref{equi1}
and \eqref{equi2} can be fulfilled. Without loss of generality, we
just need to consider the information leaked to receiver 2 from
sender 1. In \cite{multiplex}, it has been proved that if
$F_{1}^{n}$ is an uniform random variable on $\mathcal{F}_{1}^{n}$
and $\alpha_{1,\mathcal{I}}$ is the projection from
$\mathcal{C}_{1}^{n}$ to
$\prod_{i\in\mathcal{I}}\mathcal{C}_{1,i}^{n}$, then
$\alpha_{1,\mathcal{I}} \circ F_{1}^{n}$ is a family of
two-universal hashing functions. The projection is simply
$\alpha_{1,\mathcal{I}}(C_{1}^{n})\triangleq \{C_{i,1}^{n}\mid i\in
\mathcal{I}\}$. With a little abuse of notations, we also write
$C_{1}^{n} = (F_{1}^{n})^{-1}(C_{1,1}^{n}, \ldots, C_{1,T_1+1}^{n})$
(but the distinction can be made between the context compared with
equation \eqref{C}), by the uniformness of the distribution it can
be seen $C_{1}^{n}$ and
$F_{1}^{n}$ are statistically independent. 

We first calculate the mutual information
$I(\alpha_{1,\mathcal{I}}(F_{1}^{n}(C_{1}^{n}));Y_{2}^{n}|F_{1}^{n},\Lambda,
U^n)$ averaged over all possible choices of $(u,\lambda, f_{1}^{n}
)$. Then by some probabilistic argument, the achievability of the
theorem can be proved.

The following derivation is similar to
\cite{matsumotohayashi2011eprint} and \cite{multiplex}, we first fix
the code book $\lambda$ and the synchronization sequence $u$, and
$\rho$ is a real constant with $0< \rho <1$.

\begin{align}
&\mathbf{E}_{f_{1}^{n}}\exp\Big(\rho I\left(\alpha_{1,\mathcal{I}}\left(F_{1}^{n}\left(C_{1}^{n}\right)\right);Y_{2}^{n}|F_{1}^{n}=f_{1}^{n},\Lambda=\lambda, U^n = u\right)\Big)\nonumber\\
&\leq\mathbf{E}_{f_{1}^{n}} \exp\Big(\rho I(\alpha_{1,\mathcal{I}}(F_{1}^{n}(C_{1}^{n}));Y_{2}^{n},C_{2}^{n}|F_{1}^{n}=f_{1}^{n},\nonumber\\
&\qquad \Lambda=\lambda, U^n = u)\Big)\nonumber\\
 &= \mathbf{E}_{f_{1}^{n}} \exp\Big(\rho  \sum_{c_2}P_{C_{2}^{n}}(c_2)I(\alpha_{1,\mathcal{I}}(F_{1}^{n}(C_{1}^{n}));Y_{2}^{n}|\nonumber\\
 &\qquad F_{1}^{n}=f_{1}^{n}, C_{2}^{n}=c_2, \Lambda=\lambda, U^n = u)\Big)\nonumber\\
&\leq\mathbf{E}_{f_{1}^{n}} \sum_{b_2, e_2}P_{B_{2}^{n}}(b_2)P_{E_{2}^{n}}(e_2)\exp\Big(\rho I(\alpha_{1,\mathcal{I}}(F_{1}^{n}(C_{1}^{n}));Y_{2}^{n}| \nonumber\\
&\qquad F_{1}^{n}=f_{1}^{n}, B_{2}^{n}=b_2, E_{2}^{n} = e_2,\Lambda=\lambda, U^n = u)\Big)\nonumber\\
&\leq 1+ \frac{\exp(n\rho R_{\mathcal{I}})}{\exp(n\rho R_p)}\sum_{b_2, e_2}P_{B_{2}^{n}}(b_2)P_{E_{2}^{n}}(e_2) \sum_{c_1,y_2}P_{C_{1}^{n}}(c_1)\nonumber\\
&\qquad P_{Y_{2}^{n}|C_{1}^{n},B_{2}^{n}, E_{2}^{n},\Lambda=\lambda, U^n = u}(y_2|c_1,b_2, e_2)^{1+\rho}\nonumber\\
&\qquad P_{Y_{2}^{n}|B_{2}^{n}=b_2, E_{2}^{n}=e_2,\Lambda=\lambda,
U^n = u}(y_2)^{-\rho} \\
&\qquad\textrm{(by Eq. \eqref{hpa1discrete})}\label{conti}
\end{align}

where

\begin{eqnarray}
R_{\mathcal{I}} &=& \frac{\sum_{i\in\mathcal{I}}
\log|\mathcal{C}_{1,i}^{n}|}{n},\\ \label{eq:riconstraint} R_p &=&
\frac{ \log|\mathcal{C}_{1}^{n}|}{n}.\label{eq:rp}
\end{eqnarray}

We have the following relationship: $W_{t}^{n} = \Lambda(E_{t}^{n},
U^n)$ and $V_{t}^{n} = \Lambda(B_{t}^{n}, W_{t}^{n}, U^{n})$, thus

\begin{align}
&\sum_{b_2, e_2}P_{B_{2}^{n}}(b_2)P_{E_{2}^{n}}(e_2)
\sum_{c_1,y_2}P_{C_{1}^{n}}(c_1)\nonumber\\
&\qquad P_{Y_{2}^{n}|C_{1}^{n},B_{2}^{n}, E_{2}^{n},\Lambda=\lambda, U^n = u}(y_2|c_1,b_2, e_2)^{1+\rho}\nonumber\\
&\qquad P_{Y_{2}^{n}|B_{2}^{n}=b_2, E_{2}^{n}=e_2,\Lambda=\lambda,
U^n = u}(y_2)^{-\rho}\nonumber\\
&=\sum_{y_2,v_1,v_2,w_1, w_2}\sum_{b_2:(\lambda, b_2, w_2,
u)=v_2}P_{B_{2}^{n}}(b_2)\sum_{w_2:(\lambda, e_2, u) =
w_2}P_{E_{2}^{n}}(e_2)\nonumber\\
&\qquad \sum_{b_1:(\lambda, b_1, w_1,
u)=v_1}P_{B_{1}^{n}}(b_1)\sum_{w_1:(\lambda, e_1, u) =
w_1}P_{E_{1}^{n}}(e_1)\nonumber\\
&\qquad P_{Y_{2}^{n}|B_{1}^{n},B_{2}^{n}, W_{1}^{n},
W_{2}^{n},\Lambda=\lambda, U^n =
u}(y_2|b_1,b_2,w_1, w_2)^{1+\rho}\nonumber\\
&\qquad P_{Y_{2}^{n}|B_{2}^{n}=b_2, W_{2}^{n}=w_2, \Lambda=\lambda,
U^n =
u}(y_2)^{-\rho}\nonumber\\
&= \sum_{v_1, v_2,w_1, w_2, y_2}P_{V_{2}^{n}| \Lambda = \lambda, W_{2}^{n} = w_2, U^n=u}(v_2)P_{W_{2}^{n}|\Lambda=\lambda, U^n=u}(w_2)\nonumber\\
&\qquad P_{V_{1}^{n}| \Lambda = \lambda, W_{1}^{n} = w_1,
U^n=u}(v_1)P_{W_{1}^{n}|\Lambda=\lambda, U^n=u}(w_1) \nonumber\\
&\qquad P_{Y_{2}^{n}|V_{1}^{n},V_{2}^{n},\Lambda=\lambda, U^n =
u}(y_2|v_1,v_2)^{1+\rho}\nonumber\\
&\qquad P_{Y_{2}^{n}|V_{2}^{n}=v_2, \Lambda=\lambda, U^n =
u}(y_2)^{-\rho}\nonumber\\
&= \sum_{v_1, v_2, y_2}P_{V_{2}^{n}| \Lambda = \lambda, U^n=u}(v_2)
P_{V_{1}^{n}| \Lambda = \lambda,
U^n=u}(v_1)\nonumber\\
&\qquad P_{Y_{2}^{n}|V_{1}^{n},V_{2}^{n},\Lambda=\lambda, U^n =
u}(y_2|v_1,v_2)^{1+\rho}\nonumber\\
&\qquad P_{Y_{2}^{n}|V_{2}^{n}=v_2, \Lambda=\lambda, U^n =
u}(y_2)^{-\rho}\label{tov1}
\end{align}

Take \eqref{tov1} into \eqref{conti} and continue the derivation, we
have

\begin{align}
&\mathbf{E}_{f_{1}^{n}}\exp\Big(\rho I(\alpha_{1,\mathcal{I}}(F_{1}^{n}(C_{1}^{n}));Y_{2}^{n}|F_{1}^{n}=f_{1}^{n},\Lambda=\lambda, U^n = u)\Big)\nonumber\\
&\leq 1+\sum_{v_2}P_{V_{2}^{n}| \Lambda = \lambda, U^n=u}(v_2)\exp\Big(n\rho(R_{\mathcal{I}}-R_p) +\nonumber\\
&\qquad \psi(\rho,P_{Y_{2}^{n}|V_{1}^{n},V_{2}^{n}=v_2,\Lambda=\lambda, U^n = u},P_{V_{1}^{n}|V_{2}^{n}=v_2,\Lambda=\lambda, U^n = u})\Big)\nonumber\\
&\qquad\textrm{ (by \cite{matsumotohayashi2011eprint} and Eq.\ (\ref{eq:psid}))},\nonumber\\
&\leq 1+\sum_{v_2}P_{V_{2}^{n}| \Lambda = \lambda, U^n=u}(v_2)\exp\Big(n\rho(R_{\mathcal{I}}-R_p) +\nonumber\\
&\qquad \phi(\rho,P_{Y_{2}^{n}|V_{1}^{n},V_{2}^{n}=v_2,\Lambda=\lambda, U^n = u},P_{V_{1}^{n}|V_{2}^{n}=v_2,\Lambda=\lambda, U^n = u})\Big)\nonumber\\
&\qquad \textrm{ (by Eq.\ (\ref{psileqphi}))}\nonumber
\end{align}

Then we average the above upper bound over $\Lambda$ and $U^n$:

\begin{align}
&\exp\Big(\rho I(\alpha_{1,\mathcal{I}}(F_{1}^{n}(C_{1}^{n}));Y_{2}^{n}|F_{1}^{n},\Lambda, U^n)\Big)\nonumber\\
&=\exp\Big(\mathbf{E}_{f_{1}^{n},\lambda, u}\rho I(\alpha_{1,\mathcal{I}}(F_{1}^{n}(C_{1}^{n}));Y_{2}^{n}|F_{1}^{n}=f_{1}^{n},\nonumber\\
&\qquad \Lambda=\lambda, U^n = u)\Big)\nonumber\\
&\leq\mathbf{E}_{f_{1}^{n},\lambda, u}\exp\Big(\rho I(\alpha_{1,\mathcal{I}}(F_{1}^{n}(C_{1}^{n}));Y_{2}^{n}|F_{1}^{n}=f_{1}^{n},\nonumber\\
&\qquad \Lambda=\lambda, U^n = u)\Big)\nonumber\\
&\leq 1+\mathbf{E}_{v_2,\lambda, u}\exp\Big(n\rho(R_{\mathcal{I}}-R_p) +\nonumber\\
&\qquad \phi(\rho,P_{Y_{2}^{n}|V_{1}^{n},V_{2}^{n}=v_2,\Lambda=\lambda, U^n = u},P_{V_{1}^{n}|V_{2}^{n}=v_2,\Lambda=\lambda, U^n = u})\Big)\nonumber\\
&\leq 1+\mathbf{E}_{v_2, u}\exp\Big(n\rho(R_{\mathcal{I}}-R_p) +\nonumber\\
&\qquad \phi(\rho,P_{Y_{2}^{n}|V_{1}^{n},V_{2}^{n}=v_2, U^n
= u},\nonumber\\
&\qquad\sum_{\lambda}P_{\Lambda| V_{2}^{n} = v_2, U^n = u}P_{V_{1}^{n}|V_{2}^{n}=v_2,\Lambda=\lambda, U^n = u})\Big) \nonumber\\
&\qquad \textrm{ (by the concavity of $\exp(\phi)$
function)}\nonumber\\
&\leq 1+\mathbf{E}_{v_2, u}\exp\Big(n\rho(R_{\mathcal{I}}-R_p) +\nonumber\\
&\qquad \phi(\rho,P_{Y_{2}^{n}|V_{1}^{n},V_{2}^{n}=v_2, U^n =
u},P_{V_{1}^{n}|V_{2}^{n}=v_2, U^n = u})\Big) \label{belog}
\end{align}

Let us first focus on the nonconstant term of \eqref{belog}:
\begin{align}
&\mathbf{E}_{v_2, u}\exp\Big(n\rho(R_{\mathcal{I}}-R_p) +\nonumber\\
&\qquad \phi(\rho,P_{Y_{2}^{n}|V_{1}^{n},V_{2}^{n}=v_2, U^n =
u},P_{V_{1}^{n}|V_{2}^{n}=v_2, U^n = u})\Big) \nonumber\\
&= \prod_{i=1}^{n}\sum_{u_i\in \mathcal{U}, v_{2,i}\in \mathcal{V}_2}P_{U, V_2}(u_i, v_{2,i})\exp\Big(\rho(R_{\mathcal{I}}-R_p) +\nonumber\\
&\qquad \phi(\rho,P_{Y_{2}|V_{1},V_{2}=v_{2,i}, U =
u_i},P_{V_{1}|V_{2}=v_{2,i}, U = u_i})\Big) \nonumber\\
&= \Big[\sum_{u_i\in \mathcal{U}, v_{2,i}\in \mathcal{V}_2}P_{U, V_2}(u_i, v_{2,i})\exp\Big(\rho(R_{\mathcal{I}}-R_p) +\nonumber\\
&\qquad \phi(\rho,P_{Y_{2}|V_{1},V_{2}=v_{2,i}, U =
u_i},P_{V_{1}|V_{2}=v_{2,i}, U = u_i})\Big)\Big]^n \label{RHS}
\end{align}

Substitute the expression \eqref{RHS} back into \eqref{belog}, then
take $\log$ on both sides of the inequality \eqref{belog}, and use
the inequality $\log(x+1)\leq x,\ \forall x\geq 0$, then we can
obtain

\begin{align}
&I(\alpha_{1,\mathcal{I}}(F_{1}^{n}(C_{1}^{n}));Y_{2}^{n}|F_{1}^{n},\Lambda,
U^n)\nonumber\\
&\leq \frac{1}{\rho}\Big[\sum_{u_i\in \mathcal{U}, v_{2,i}\in \mathcal{V}_2}P_{U, V_2}(u_i, v_{2,i})\exp\Big(\rho(R_{\mathcal{I}}-R_p) +\nonumber\\
&\qquad \phi(\rho,P_{Y_{2}|V_{1},V_{2}=v_{2,i}, U =
u_i},P_{V_{1}|V_{2}=v_{2,i}, U = u_i})\Big)\Big]^n \label{fin}
\end{align}

Since what we concern is under what situation the above upper bound
goes to zero as $n\rightarrow \infty$, take the logarithm of
\eqref{fin} we have

\begin{eqnarray*}
-\log \rho +n\rho \Biggl[R_{\mathcal{I}}-R_p + A(\rho)\Biggr].
\end{eqnarray*}

In the above equation,

\begin{equation}
\begin{split}
A(\rho) = &\frac{1}{\rho}\log \Big(\sum_{u_i, v_{2,i}{V}_2}P_{U,
V_2}(u_i, v_{2,i})\\
&\exp\big(\phi(\rho,P_{Y_{2}|V_{1},V_{2}=v_{2,i}, U =
u_i},P_{V_{1}|V_{2}=v_{2,i}, U = u_i})\big)\Big).
 \end{split}
 \end{equation}

We can see that $A(\rho) \rightarrow I(V_1; Y_2| U, V_2)$ as $\rho
\rightarrow 0$ by the l'H\^opital's rule.


Set the size of $\mathcal{C}_{1}^{n}$ as
\[
\frac{\log |\mathcal{C}_{1}^{n}|}{n} = R_p = R_{1}- \delta
\]

with $\delta>0$ such that
\begin{equation}
R_{\mathcal{I}} - R_{1,\mathcal{I},e}  >R_{\mathcal{I}} -
R_p+I(V_1;Y_2|UV_2)\label{eq:ratesetting}
\end{equation}

\noindent for all $\emptyset \neq \mathcal{I} \subseteq \{1$,
\ldots, $T\}$. Note that here $(R_{1}, R_{2})\in R_{HK}$.

Then by  Eq.\ (\ref{fin}), we can see that there exists $\epsilon_n
\rightarrow 0 (n\rightarrow \infty)$ such that
\begin{equation}
I(C_{1,\mathcal{I}};Y_{2}^{n}|F_{1}^{n},\Lambda, U^n) \leq
\epsilon_n \label{eq:gotozero}
\end{equation}
if $R_{\mathcal{I}} = R_{1,\mathcal{I},e}$. On the other hand, when
$R_{\mathcal{I}} > R_{1,\mathcal{I},e}$, by Eqs.\ \eqref{belog} and
\eqref{RHS}, we have
\begin{eqnarray}
&&\mathbf{E}_{f_{1}^{n},\lambda,u}\exp\Big(\rho I(C_{1,\mathcal{I}};Y_{2}^{n}|F_{1}^{n}=f_{1}^{n},\Lambda=\lambda, U^n = u)\Big)\nonumber\\
& \leq & 1+
\exp\Big(n\rho(R_{\mathcal{I}}-R_p+I(V_1;Y_2|UV_2)+\epsilon(\rho))\Big),
\label{eq:equivocation}
\end{eqnarray}
where $\epsilon(\rho) \rightarrow 0 (\rho\rightarrow 0)$. Let
$\delta_n$ be the decoding error probability of the underling
channel code for the interference channel. Then, by the almost same
argument as \cite{matsumotohayashi2011netcod}, there exists at least
one tuple $(f_{1}^{n},\lambda, u)$ such that

\begin{align}
&I(C_{1,\mathcal{I}};Y_{2}^{n}|F_{1}^{n},\Lambda, U^n) < 2 \times 2\times 2^T \epsilon_n \;(\textrm{if }R_{\mathcal{I}} = R_{1,\mathcal{I},e}),\nonumber\\
&\exp(\rho
I(C_{1,\mathcal{I}};Y_{2}^{n}|F_{1}^{n}=f_{1}^{n},\Lambda=\lambda,
U^n=u)) \nonumber\\
\leq& 2 \times 2\times 2^T \Big[1+
\exp\Big(n\rho(R_{\mathcal{I}}-R_p+
I(V_1;Y_2|UV_2)+\epsilon(\rho))\Big)\Big],\label{eq:equi2}\\
&\textrm{Decoding error probability} \leq 2 \times 2\times 2^T
\delta_n.\nonumber
 \end{align}

In the above expressions, $\epsilon(\rho)$ is a constant depends
only on $\rho$, and $\lim_{\rho\rightarrow 0}\epsilon(\rho) = 0$.

By Eq.\ (\ref{eq:equi2}) we can see
\begin{align}
&\frac{I(C_{1,\mathcal{I}};Y_{2}^{n}|F_{1}^{n}=f_{1}^{n},\Lambda=\lambda,
U^n=u))}{n} \nonumber\\
&\leq \frac{1+\log (2 \times 2\times 2^T)}{n\rho} +
R_{\mathcal{I}}-R_p +I(V_1;Y_2|UV_2)+\epsilon(\rho).\label{eq1000}
\end{align}

\noindent for $R_{\mathcal{I}}-R_p+I(V;Z|U)+\epsilon(\rho) \geq 0$,
where we used $\log(1+\exp(x)) \leq 1+x$ for $x\geq 0$. By Eq.\
(\ref{eq:ratesetting}) and (\ref{eq1000}) we can see that the
equivocation rate $H(C_{1,\mathcal{I}}|
Y_{2}^{n},F_{1}^{n}=f_{1}^{n},\Lambda=\lambda)/n$ becomes larger
than the required value $R_{1,\mathcal{I},e}$ for sufficiently large
$n$. This completes the analysis of the equivocation rates and the
mutual information for all $\emptyset \neq \mathcal{I} \subseteq
\{1$, \ldots, $T_1\}$.

\begin{remark}
In \eqref{eq1000}, we have ``almost'' provided a bound for the
mutual information even when the length of the codeword is finite.
We say ``almost'' because the error term $\epsilon(\rho)$ is not
explicitly determined. But we can always choose a small $\rho$, thus
make the error term as small as we want. In this way we bound the
equivocation rate in the worst case with finite codeword length.
\end{remark}
\end{proof}

\subsection{Outer Bound}

Next we will provide an outer bound for secure multiplex coding over
interference channels:

\begin{theorem}\label{outer}
An outer bound for the interference channels with secure multiplex
coding is as the following:

\begin{align*}
R_{1,\mathcal{I}_1, e}&\leq \min\left\{
      \begin{aligned}
         & I(V_1; Y_1|U) - I(V_1; Y_2|U) \\
         &I(V_1; Y_1|V_2, U) - I(V_1; Y_2|V_2,U)
      \end{aligned} \right\}\\
R_{2,\mathcal{I}_2, e}&\leq \min\left\{
      \begin{aligned}
         & I(V_2; Y_2|U) - I(V_2; Y_1|U) \\
         &I(V_2; Y_2|V_1, U) - I(V_2; Y_1|V_1,U)
      \end{aligned} \right\}\\
0 &\leq R_{1,\mathcal{I}_1, e}\leq \sum_{i\in\mathcal{I}_1}R_{1,i}\\
0 &\leq R_{2,\mathcal{I}_2, e}\leq \sum_{i\in\mathcal{I}_2}R_{2,i}\\
0 &\leq R_1 = \sum_{1\leq i\leq T_1}R_{1,i} \leq I(V_1; Y_1|U)\\
0 &\leq R_2 = \sum_{1\leq j\leq T_2}R_{2,j} \leq I(V_2; Y_2|U)\\
\end{align*}
\noindent In the above expressions, $\emptyset \neq
\mathcal{I}_1\subseteq \{1, \ldots, T_1\},\  \emptyset
\neq\mathcal{I}_2 \subseteq \{1, \ldots, T_2\}$ are subsets of the
index sets for transmitter 1 and 2 respectively.

Note that the variables in the above expressions satisfy the Markov
chain $U\rightarrow V_1V_2\rightarrow X_1X_2\rightarrow Y_1Y_2$.
\end{theorem}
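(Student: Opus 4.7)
The plan is to combine Fano's inequality (from the reliable decoding requirement) with the equivocation lower bounds \eqref{equi1}--\eqref{equi2}, and then single-letterize via the Csisz\'ar sum identity, following the blueprint used for wiretap-type converses. The bounds fall into three families: the raw rate bounds $R_t \leq I(V_t;Y_t|U)$, the wiretap-gap equivocation bounds, and the tighter $V_2$- or $V_1$-conditional variants.

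For the rate bounds, I would invoke the classical Han--Kobayashi converse restricted to the messages decodable at each receiver. Since receiver $t$ reliably recovers $(C_{t,1}^n,\ldots,C_{t,T_t}^n)$, Fano's inequality yields $n\sum_i R_{t,i} \leq I(C_t^n; Y_t^n) + n\epsilon_n$; single-letterization with $V_{t,i}$ chosen as the encoded symbol and $U$ as a time-sharing random variable produces the stated single-letter form.

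For the equivocation bound $R_{1,\mathcal{I}_1,e} \leq I(V_1;Y_1|U) - I(V_1;Y_2|U)$, I would start from $n R_{1,\mathcal{I}_1,e} \leq H(C_{1,\mathcal{I}_1}^n | Y_2^n) + n\delta_n$ guaranteed by \eqref{equi1}. Decomposing
\[
H(C_{1,\mathcal{I}_1}^n | Y_2^n) = I(C_{1,\mathcal{I}_1}^n; Y_1^n | Y_2^n) + H(C_{1,\mathcal{I}_1}^n | Y_1^n, Y_2^n),
\]
and controlling the residual entropy by Fano's inequality (receiver~1 decodes $C_{1,\mathcal{I}_1}^n$ from $Y_1^n$ since $C_{1,\mathcal{I}_1}^n$ is a deterministic function of the full message tuple), I obtain $n R_{1,\mathcal{I}_1,e} \leq I(C_{1,\mathcal{I}_1}^n; Y_1^n) - I(C_{1,\mathcal{I}_1}^n; Y_2^n) + n(\epsilon_n+\delta_n)$. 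Applying the Csisz\'ar sum identity with the standard choice $U_i = (Y_1^{i-1}, Y_{2,i+1}^n)$ causes the cross terms to cancel, and after absorbing a uniform time-sharing index into $U$ the $n$-letter gap collapses to the desired single-letter expression $I(V_1;Y_1|U)-I(V_1;Y_2|U)$, where $V_1$ is identified with $(C_{1,\mathcal{I}_1}, U)$ in the single-letter representation.

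For the tighter conditional variant $R_{1,\mathcal{I}_1,e} \leq I(V_1;Y_1|V_2,U) - I(V_1;Y_2|V_2,U)$, I would augment the conditioning throughout the chain by $C_{2,\mathcal{I}_2}^n$, which is independent of $C_{1,\mathcal{I}_1}^n$ by construction, so the equivocation is unaffected and Fano's inequality is only strengthened. Rerunning the Csisz\'ar-sum manipulation with $C_{2,\mathcal{I}_2}^n$ in the conditioning places $V_2$ in the conditioning of the single-letter bound. The analogous bounds for transmitter/receiver~2 follow by symmetry, and verifying the Markov chain $U\to V_1V_2\to X_1X_2 \to Y_1Y_2$ is immediate from the product channel and the stochastic-encoder structure.

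The main obstacle will be organizing the simultaneous identification of auxiliaries so that one choice of $(U, V_1, V_2)$ yields all four inequalities in the statement while respecting the stated Markov chain, and verifying that the resulting bounds hold \emph{uniformly} over every nonempty subset $\mathcal{I}_1, \mathcal{I}_2$. The latter amounts to checking that the Csisz\'ar-sum auxiliaries do not depend on the subset in an essential way; the uniform distribution of the $C_{t,i}^n$ and their mutual independence make this bookkeeping routine, but it is the point that needs to be handled carefully.
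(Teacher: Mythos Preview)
Your overall plan---Fano's inequality, single-letterization via the Csisz\'ar sum identity, and conditioning on the other user's message for the second family of bounds---matches the paper's approach, which in fact just writes the opening Fano steps and then defers the single-letterization to \cite{Interference}. Two points deserve correction, however.

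First, your decomposition $H(C_{1,\mathcal{I}_1}^n|Y_2^n)=I(C_{1,\mathcal{I}_1}^n;Y_1^n|Y_2^n)+H(C_{1,\mathcal{I}_1}^n|Y_1^n,Y_2^n)$ followed by Fano gives you the upper bound $I(C_{1,\mathcal{I}_1}^n;Y_1^n|Y_2^n)+n\epsilon_n$, and this quantity is in general \emph{larger} than $I(C_{1,\mathcal{I}_1}^n;Y_1^n)-I(C_{1,\mathcal{I}_1}^n;Y_2^n)$ (the difference is $I(C_{1,\mathcal{I}_1}^n;Y_2^n|Y_1^n)\geq 0$), so you cannot pass to the latter as written. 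The paper's route avoids this: it writes $H(C_1|Y_2^n)\leq H(C_1|Y_2^n)-H(C_1|Y_1^n)+n\delta_1$, which equals $I(C_1;Y_1^n)-I(C_1;Y_2^n)+n\delta_1$ exactly, and then the Csisz\'ar sum applies.

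Second, the paper resolves what you flag as the ``main obstacle'' by bounding $R_{1,\mathcal{I}_1,e}$ through the \emph{full} message $C_1$ rather than the subset $C_{1,\mathcal{I}_1}$: since $H(C_{1,\mathcal{I}_1}^n|Y_2^n)\leq H(C_1^n|Y_2^n)$, the subsequent auxiliary identification (and hence $(U,V_1,V_2)$) is independent of $\mathcal{I}_1$ from the outset. Likewise, for the conditional bound the paper adds the full $C_2$ (not $C_{2,\mathcal{I}_2}$) to the conditioning via $H(C_1,C_2|Y_2^n)=H(C_1|Y_2^n,C_2)+H(C_2|Y_2^n)$ and applies Fano at receiver~2. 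This makes the uniformity over subsets automatic rather than a bookkeeping issue.
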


\begin{proof}
The outer bound is constructed by using the techniques in
\cite{Interference} section IV, by combining two bounds obtained
from different set of inequalities.

Obviously it can be seen,

$$R_t \leq I(V_t; Y_t|U)$$

Since after receiver $t$ receives $Y_{t}^{n}$, it will be able to
decode the confidential message $C_{t}^{n}$ with high probability
(reliable transmission requirement), we can express this as:

\addtocounter{equation}{1}
\begin{align}
H(C_1|Y_1) \leq n\delta_1 \tag{\theequation a}\\
H(C_2|Y_2) \leq n\delta_2 \tag{\theequation b}
\end{align}

With out loss of generality, we only consider for the case of $R_{1,
\mathcal{I}_1, e}$.

\subsubsection{Construct the 1st bound}

\begin{equation} \label{eq:1}
R_{1, \mathcal{I}_1, e}\leq H(C_1|Y_2)\leq H(C_1|Y_2) - H(C_1|Y_1)+
n\delta_1
 \end{equation}

Then by almost the same arguments as in \cite{Interference} section
IV part A (just change $W$ to $C$), we have

\begin{equation} \label{equivo_1}
R_{1, \mathcal{I}_1, e}\leq I(V_1; Y_1|U) - I(V_1; Y_2|U)
 \end{equation}

\subsubsection{Construct the 2nd bound}
\begin{equation} \label{eq:2}
\begin{split}
&R_{1, \mathcal{I}_1, e} \leq H(C_1|Y_2)\leq H(C_1, C_2|Y_2)\\
&= H(C_1|Y_2,  C_2) + H(C_2|Y_2)\\
&\leq H(C_1|Y_2,  C_2) + n\delta_2 - H(C_1|Y_1) + n\delta_1\\
&\leq H(C_1|Y_2,  C_2) - H(C_1|Y_1,  C_2) + n(\delta_1+\delta_2)
 \end{split}
 \end{equation}

By the same argument in \cite{Interference} section IV part B, we
have

\begin{equation} \label{equivo_2}
R_{1, \mathcal{I}_1, e}\leq I(V_1; Y_1|V_2, U) - I(V_1; Y_2|V_2, U)
 \end{equation}
\end{proof}

\begin{remark}
If we look carefully into the outer and inner bounds, we can see
that when $R_2 = R_{2, e} = 0$, which means the model degenerates to
the wiretap channel, the inner and outer bound coincides and become
the capacity-equivocation region of the wiretap channel.
\end{remark}


\begin{figure*}[ht]
\begin{align}
I(V_1; Y_1| W_1W_2) &=
\frac{1}{2}\log\Biggl[1+\frac{\beta_1\theta_1(1-\mu_1)P_1}{1+
\beta_1(1-\theta_1)P_1+
\tau_{1}^{2}\beta_2(1-\theta_2\mu_2)P_2}\Biggr]\label{V_1; Y_1| W_1W_2}\\
I(W_2V_1; Y_1| W_1) &=
\frac{1}{2}\log\Biggl[1+\frac{\beta_1\theta_1(1-\mu_1)P_1
+\tau_{1}^{2}\beta_2\theta_2\mu_2P_2 }{1+ \beta_1(1-\theta_1)P_1+
\tau_{1}^{2}\beta_2(1-\theta_2\mu_2)P_2}\Biggr]\label{W_2V_1; Y_1| W_1}\\
I(V_1; Y_1| W_2) &= \frac{1}{2}\log\Biggl[1+\frac{\beta_1\theta_1P_1
}{1+ \beta_1(1-\theta_1)P_1+
\tau_{1}^{2}\beta_2(1-\theta_2\mu_2)P_2}\Biggr]\label{V_1; Y_1| W_2}\\
I(W_2V_1; Y_1) &=
\frac{1}{2}\log\Biggl[1+\frac{\beta_1\theta_1P_1+\tau_{1}^{2}\beta_2\theta_2\mu_2P_2
}{1+ \beta_1(1-\theta_1)P_1+
\tau_{1}^{2}\beta_2(1-\theta_2\mu_2)P_2}\Biggr]\label{W_2V_1; Y_1}\\
I(V_1; Y_2|V_2) &=
\frac{1}{2}\log\Biggl[1+\frac{\tau_{2}^{2}\beta_1\theta_1P_1}{1+
\tau_{2}^{2}\beta_1(1-\theta_1)P_1+ \beta_2(1-\theta_2)P_2
}\Biggr]\label{V_1; Y_2|V_2}\\
I(V_1;Y_1) &= \frac{1}{2}\log\Biggl[1+\frac{\beta_1\theta_1P_1 }{1+
\beta_1(1-\theta_1)P_1+
\tau_{1}^{2}\beta_2P_2}\Biggr]\label{V_1; Y_1}\\
I(V_1;Y_2) &=
\frac{1}{2}\log\Biggl[1+\frac{\tau_{2}^{2}\beta_1\theta_1P_1
}{1+\tau_{2}^{2} \beta_1(1-\theta_1)P_1+
\beta_2P_2}\Biggr]\label{V_1; Y_2}\\
I(V_1;Y_1|V_2) &= \frac{1}{2}\log\Biggl[1+\frac{\beta_1\theta_1P_1
}{1+\beta_1(1-\theta_1)P_1+
\tau_{1}^{2}\beta_2(1-\theta_2)P_2}\Biggr]\label{V_1; Y_1|V_2}
\end{align}
\vspace*{10pt} \hrulefill
\end{figure*}


\section{Gaussian Interference Channel}

In this section we consider Gaussian interference channel with
confidential messages. As in \cite{interference_model}, the channel
input and output are real numbers, and the channel in Fig.
\ref{Interf} is specified as:

\addtocounter{equation}{1}
\begin{align}\label{GIC}
Y_1 = X_1 + \tau_1 X_2 + N_1 \tag{\theequation a}\\
Y_2 = \tau_2 X_1 +  X_2 + N_2 \tag{\theequation b}
\end{align}

\noindent $\tau_1$ and $\tau_2$ are normalized crossover
\emph{channel gains}, $X_t$ has the average power constraint:

$$\sum_{i=1}^{n}\frac{E[X_{t, i}^{2}]}{n}\leq P_t.$$

\noindent and $N_1$ and $N_2$ are two independent, zero-mean, unit
variance, Gaussian noise variables. 

We can easily carry over our proof in the last section to the case
of Gaussian channel, because

\begin{itemize}
\item By replacing the probability mass functions $P_{Z|L}$ and $P_Z$ by
their probability density functions, Eq.\ (\ref{hpa1discrete})  can
be extended to the Gaussian case.

\item The random codebook $\Lambda$ obeys
multidimensional Gaussian distribution.

\item The concavity of $\exp(\phi)$ is preserved even after
its second argument is changed to be conditional probability
density.

\item All the derivations in the last
section hold true even when $V_t$, $Y_t$, $\Lambda$ are continuous
and their probability mass functions are replaced with probability
density functions, while $U$, $B_{t}^{n}$, $F_{t}^{n}$ remain to be
discrete random variables over finite alphabets.
\end{itemize}

To get the expression for the inner bound on the capacity region for
secure multiplex coding, we only need to calculate for $I(V_1; Y_1|
W_1W_2)$, $I(W_2V_1; Y_1| W_1)$, $I(V_1; Y_1| W_2)$, $I(W_2V_1;
Y_1)$, $I(V_1; Y_2|V_2)$, $I(V_1;Y_1)$, $I(V_1;Y_2)$ and
$I(V_1;Y_2|V_2)$ because of the symmetry of the coefficients. (For
other expressions, we only need to switch $1$ and $2$.)

The scheme in our consideration is similar to \cite{Interference}
except that we allow both transmitters to generate artificial noise:

Assume transmitter $t$ only use a fraction of $\beta_t$ of its
maximum power ($0\leq \beta_t\leq 1$). Among the transmission power,
transmitter $t$ then takes out a fraction of $(1- \theta_t)$ ($0\leq
\theta_t\leq 1$) to make artificial noise to achieve the secrecy
requirement. Among the power devoted for transmitting confidential
message, a fraction of $\mu_t $ is used over the ``common channel''.
Let $U$ serve as a convex combination operator, thus we have (for $t
= 1, 2$)

$$X_t = V_t + A_t,\ V_t = W_t + Q_t$$

\noindent where $W_t, Q_t, A_t$ are independent Gaussian random
variables with $W_t \sim  \mathcal{N}[0, \beta_t\mu_t\theta_tP_t]$,
$Q_t \sim \mathcal{N}[0, \beta_t(1-\mu_t)\theta_tP_t]$ and $A_t \sim
\mathcal{N}[0, \beta_t(1-\theta_t)P_t]$.

It is straightforward to evaluate the mutual information mentioned,
and we have \eqref{V_1; Y_1| W_1W_2} $\sim$ \eqref{V_1; Y_2|V_2}.

Take all the above equations into theorem \ref{inner} and
\ref{outer}, we have obtained the expressions of inner and outer
bounds over the Gaussian interference channel.

\section{Discussion and Comparison}

Compare with the results in \cite{Interference} where also the inner
and outer bounds are given, one of the most obvious advantage of our
results is that by secure multiplex coding, the rate loss incurred
by adding dummy message to achieve security is removed, thus the
maximum total transmission rate is increased.

Now we compare our results with that in \cite{Interference} when
$T_1 = T_2 = 1$, which means that  a ``trivial'' multiplex coding
with only one channel is used. This scheme is equivalent to adding
dummy message to achieve secrecy as in \cite{Interference}.

For convenience of comparison, we denote the rate of confidential
messages transmitted by transmitter $t$ is $R_t$, while the rate of
the dummy message is denoted by $R_{t}^{'}$. And the confidential
messages is conveyed by random variable $C_1$ and $C_2$ for the two
transmitters respectively.

We can see the outer bound becomes the same as in
\cite{Interference}, 
%
%
so in the following we focus on the inner bound.

If we let $B_t = C_t$ and $E_t = \emptyset$ this means we remove the
variables $W_1$ and $W_2$ from \eqref{Han-Kobayashi_1} to
\eqref{Han-Kobayashi_final}, then we can see that the capacity
region in theorem \ref{inner} becomes

\begin{equation} \label{Inner_Inter}
\begin{split}
0\leq R_{1}&\leq I(V_1;Y_1|U) - I(V_1;Y_2|U, V_2)\\
0\leq R_{2}&\leq I(V_2;Y_2|U)- I(V_2;Y_1|U, V_1)
 \end{split}
 \end{equation}

\noindent which is exactly the inner bound proposed in
\cite{Interference}. In this way we can easily see that at least our
proposed inner bound is always not worse than that proposed in
\cite{Interference}. 

Moreover by \eqref{eq1000}, we can bound the amount of leakage of
confidential message for the unintended receiver (without loss of
generality we consider for $C_1$):

\begin{align}\label{Upp_Com}
&\frac{I(C_{1};Y_{2}^{n}|F_{1}^{n}=f_{1}^{n},\Lambda=\lambda,
U^n=u))}{n} \nonumber\\
&\leq \frac{1+\log (2 \times 2\times 2^T)}{n\rho} - R_{1}^{'}
+I(V_1;Y_2|UV_2)+\epsilon(\rho).
\end{align}

Note \eqref{Upp_Com} is valid even when the rate of the secret
transmission is higher than the described secret capacity.

In the following, we will compare the inner bound proposed in
theorem \ref{inner} and \eqref{Inner_Inter} in \cite{Interference},
and by show some numerical results in several Gaussian interference
channels (different $\tau_1$ and $\tau_2$) with different power
constraints ($P_1$ and $P_2$).

The ``Secure Region in \cite{Interference}'' indicates
\eqref{Inner_Inter}, while the ``Secure Han-Kobayashi Region'' is
presented in theorem \ref{inner}. 
In the cases where $P_1 = P_2$ and $\tau_1 = \tau_2$, we are not
able to find any visible difference between the two bounds, as shown
in Fig. \ref{P10_t0.2}. But when the power constraints are different
for the two transmitters (i.e. $P_1\neq P_2$) and the channel is not
symmetric (i.e. $\tau_1\neq \tau_2$), we find some cases where our
proposed inner bound is strictly larger as shown in Fig.
\ref{P80_P10_01_03}.

\begin{figure}[t]
\centering{}\includegraphics[scale=0.6,trim=0 0 0
0,clip=]{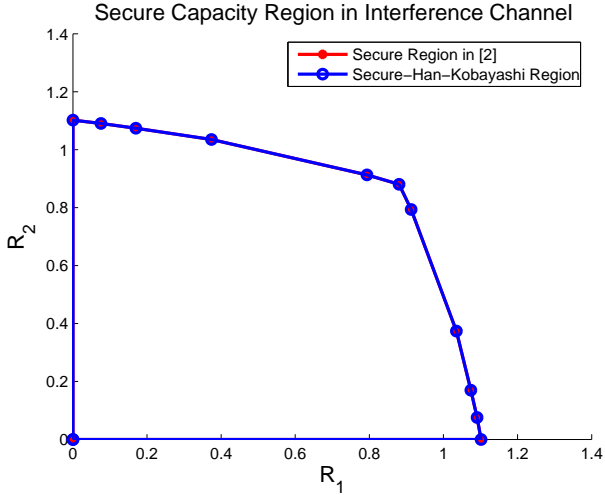}\caption{Achievable regions with
$\tau_1=\tau_2 =0.2$ and $P_1 = P_2 = 10$.} \label{P10_t0.2}
\end{figure}

\begin{figure}[t]
\centering{}\includegraphics[scale=0.6,trim=0 0 0
0,clip=]{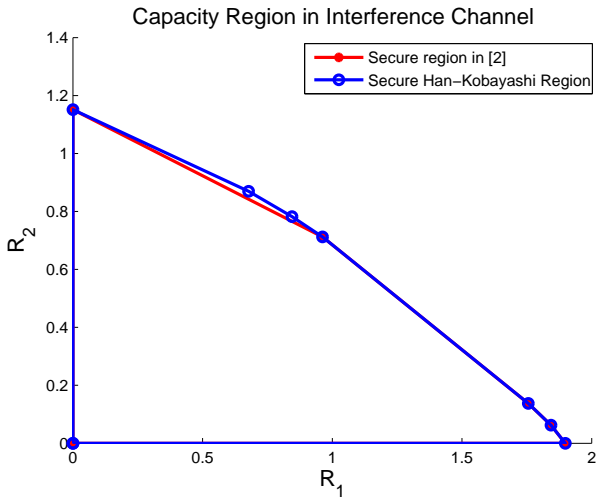}\caption{Achievable regions with $\tau_1=
0.1, \tau_2 =0.3$ and $P_1 =80, P_2 = 10$.} \label{P80_P10_01_03}
\end{figure}

Here we also plot the best achievable secure region of the two
schemes, by this we mean that we add ``non-trivial'' multiplex
coding to remove the rate loss caused by the dummy messages. In
\cite{Interference}, the region is expressed as

\begin{equation} \label{Inter_best}
\begin{split}
0\leq R_{1}&\leq I(V_1;Y_1|U)\\
0\leq R_{2}&\leq I(V_2;Y_2|U)
 \end{split}
 \end{equation}

\noindent while our proposed scheme can achieve the whole
Han-Kobayashi region. So we also compare \eqref{Inter_best} with the
Han-Kobayashi region in Fig. \ref{P10_t0.5} and \ref{P100_t0.5}.
These plots displayed oblivious improvement which comes from the
coding scheme that we applied.

\begin{figure}[t]
\centering{}\includegraphics[scale=0.6,trim=0 0 0
0,clip=]{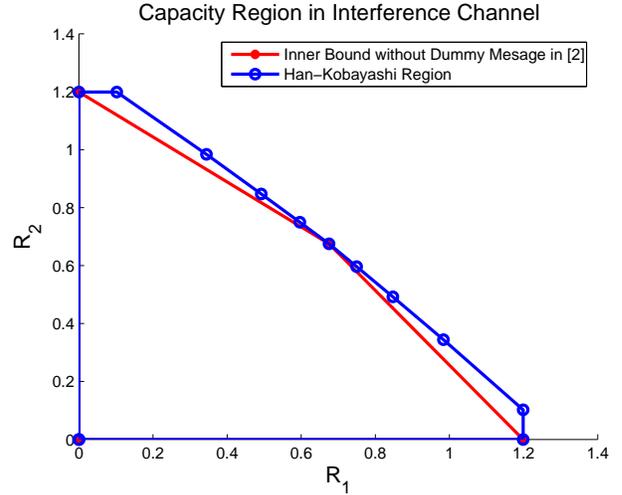}\caption{Achievable regions with
$\tau_1=\tau_2 =0.5$ and $P_1 = P_2 = 10$.} \label{P10_t0.5}
\end{figure}

\begin{figure}[t]
\centering{}\includegraphics[scale=0.6,trim=0 0 0
0,clip=]{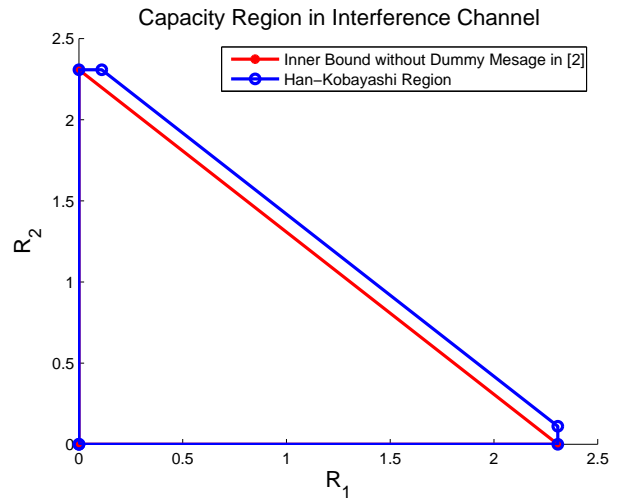}\caption{Achievable regions with
$\tau_1=\tau_2 =0.5$ and $P_1 = P_2 = 100$.} \label{P100_t0.5}
\end{figure}

\section{Conclusions}
In this paper inner and outer bounds for secure multiplex coding
over interference channel has been proposed, we have also presented
a random coding scheme that achieves the inner bound. We improved
the inner bound in \cite{Interference} and pushed the inner bound to
the Han-Kobayashi region. Also we have substituted the weak secrecy
requirement by the strong one, and removed the information rate loss
caused by the dummy message. Moreover we evaluated the equivocation
rate for a collection of secret messages. Finally we extended our
results to the case of Gaussian channel.

\section{Acknowledgement}
This work was partly supported by the AoE Grant E-02/08 from the
University Grants Committee of the Hong Kong Special Administration
Region, China, and also by the MEXT Grant-in-Aid for Young
Scientists (A) No.\ 20686026 and (B) No.\ 22760267, and Grant-in-Aid
for Scientific Research (A) No.\ 23246071.

\end{document}